\newtheorem{theorem}{{\bf Theorem}}
\newtheorem{proposition}{\noindent {\bf Proposition}}
\newcommand{\tr}{\mathop{\rm tr}}
\newcommand{\diag}{\mathop{\rm diag}}
\let\l@ENGLISH\l@english
\begin{document}%
\title{Numerically Stable Evaluation of Moments of Random Gram Matrices with Applications }%On the Eigenvalue Distribution of Large One-Side Correlated Gram Matrices}
\author{Khalil Elkhalil,~\IEEEmembership{Student Member,~IEEE,} Abla Kammoun,~\IEEEmembership{Member,~IEEE}, Tareq~Y.~Al-Naffouri,~\IEEEmembership{Member,~IEEE}, and Mohamed-Slim Alouini,~\IEEEmembership{Fellow,~IEEE}

\thanks{
Copyright (c) 2016 IEEE. Personal use of this material is permitted. However, permission to use this material for any other purposes must be obtained from the IEEE by sending a request to pubs-permissions\@ieee.org.
K. Elkhalil, A. Kammoun, T. Y. Al-Naffouri and M.-S. Alouini are with the Electrical Engineering Program, King Abdullah University of Science and Technology, Thuwal, Saudi Arabia; e-mails: \{khalil.elkhalil, abla.kammoun, tareq.alnaffouri, slim.alouini\}@kaust.edu.sa.
}
%\thanks{Mohammed Eltayeb and Hamid Reza Bahrami are with the Department
%of Electrical and Computer Engineering, The University of Akron, Ohio, USA; e-mails: \{me33, hrb\}@uakron.edu. Khalil Elkhalil and Tareq Y. Al-Naffouri are with the Electrical Engineering Department, King Abdullah University of Science and Technology, Thuwal, Saudi Arabia; e-mails: \{khalil.elkhalil, tareq.alnaffouri\}@kaust.edu.sa. Tareq Y. Al-Naffouri is also associated with the Department of Electrical Engineering, King Fahd University of Petroleum
%and Minerals, Dhahran 31261, Kingdom of Saudi Arabia.}% <-this % stops 

}

\maketitle
\begin{abstract}
    This paper is  focuses on the computation of the positive moments of one-side correlated random Gram matrices. Closed-form expressions for the moments can be obtained easily, but numerical evaluation thereof is prone to numerical stability,  especially in high-dimensional settings. This letter provides a numerically stable method that efficiently computes the positive moments in closed-form. The developed expressions are more accurate and can lead to higher
    accuracy levels when fed to moment based-approaches.  As an application, we show how the obtained moments can be used to approximate the marginal distribution of the eigenvalues of random Gram matrices. %show that these moments can be leveraged to approximate the
   % marginal distribution.  %It is well-known that in many circumstances, involving essentially ill-conditioned correlation matrices, the
    
%	This letter provides an efficient procedure to compute the positive moments of one-side correlated Gram matrices. These moments are very important statistics that can help answer many questions related to the eigenvalue' distribution of the underlined type of matrices. Although probability density function (PDF) is available in closed forms, it presents singularity issues especially in the case where the dimensions are large or the eigenvalues of the correlation matrix are close to each other. Therefore, based on the efficient computation of the positive moments, we show that it is possible to approximate the related probability density function (PDF) using Laguerre polynomials. 
\end{abstract}
\begin{IEEEkeywords}
 Gram matrices, one sided correlation, positive moments, Laguerre polynomials.
\end{IEEEkeywords}
\section{Introduction}
\PARstart{G}{ram} random matrices with one-sided correlation naturally arise in the context of signal processing \cite{khalil_tsp} and wireless communications \cite{giusi,mckay_pdf}. For instance, in signal processing, inverse moments of this kind of matrices are used to evaluate the performance of linear estimators such as the best linear unbiased estimator (BLUE) and optimize the design of some covariance matrix estimators  \cite{khalil_tsp,MITthesis}. In wireless
communications, Gram random matrices arise as a key element in the computation of the ergodic capacity of amplify and forward (AF) multiple input multiple output (MIMO) dual-hop systems \cite{mckay_pdf}.  \par 
Given a random matrix $\mathbf{H} \in \mathbb{C}^{q \times n_t}$ such that $\mathbf{H}=\mathbf{\Lambda}^{\frac{1}{2}}\mathbf{X}$, where $\mathbf{X}$ is  a standard complex Gaussian random matrix and $\mathbf{\Lambda}$ is a Hermitian positive definite matrix, the authors in \cite{mckay_pdf} derive a closed form expression of the marginal probability density function (PDF) of an unordered eigenvalue of the Gram matrix $\mathbf{W}=\mathbf{H}^{H}\mathbf{H}$ for arbitrary dimensions
$n_t$ and
$q$. The result, although being very useful, involves the inversion of a large Vandermonde matrix and as such might not be always  numerically stable, especially in the following situations:  %Although the result is tractable and elegant, it might exhibit  numerical instability as it involves the inversion of a Vandermonde matrix.  
\begin{itemize}
	\item The dimensions $q$ belongs to moderate to large values. %This is due to the fact that the derived PDF requires the inversion of a Vandermonde matrix which becomes unstable as $q$ grows.
	\item The gap between the eigenvalues of $\mathbf{\Lambda}$ is small. %This can be directly seen from the derived result in \cite{mckay_pdf}.
\end{itemize}
To solve this problem, an expression of the exact marginal PDF has been proposed in \cite{moe_win_09} when some eigenvalues are identical. However, the problem of numerical stability remains  when some eigenvalues of $\boldsymbol{\Lambda}$ are different but close to each other.
Motivated by these facts, we provide a more stable method to compute the positive moments of $\mathbf{W}$ without the need to invert large Vandermonde matrices. The contributions of this letter are summarized as follows:
\begin{itemize}
	\item We provide a numerically stable method to evaluate the positive moments of $\mathbf{W}$.
    \item Using Laguerre polynomials and based on the calculated positive moments, we provide a numerically stable approximation of the marginal probability density function (PDF) of the eigenvalues  of $\mathbf{W}$. 
\end{itemize}

\par 
The remainder of this letter is organized as follows. In Section \ref{method}, we provide the main steps to efficiently compute the positive moments of $\mathbf{W}$. In Section \ref{PDF approximation}, we propose to approximate the marginal PDF based on the computed positive moments and using Laguerre polynomials. In Section \ref{results}, we present some numerical results to validate our method and finally we conclude our work in Section \ref{conclusion}.
%\footnote{Note that in \cite{moe_win_09}, the PDF can be obtained when some of the eigenvalues are exactly equal. However, the singularity issues comes from the fact that some of the eigenvalues are close to each other.}
%\section{A Stable Procedure for Moments Computation} \label{method}
%\markboth{IEEE TRANSACTIONS ON WIRELESS COMMUNICATIONS ,~Vol.~14, No.~8, August~2016}%
%\vspace{-15mm}
%\begin{abstract}
%
%\end{abstract}
\section{A numerically stable method to compute the moments of random Gram matrices}
\label{method}
\subsection{Problem statement}
Let $\mathbf{H} =\boldsymbol{\Lambda}^{\frac{1}{2}} {\bf X}\in\mathbb{C}^{q\times n_t}$ where $\mathbf{\Lambda} \in \mathbb{C}^{q\times q}$ is a positive definite matrix with distinct eigenvalues $0<\beta_1< \beta_2< \cdots <\beta_q$ and ${\bf X}$ a standard complex Gaussian matrix. Assume that $n_t\leq q$.  The marginal PDF of an unordered eigenvalue $\lambda$ of ${\bf W}={\bf H}^{H}{\bf H}$ is given by \cite[Lemma 1]{mckay_pdf} % The aim of this paper is to efficiently compute the positive moments of the Gram matrix $\mathbf{W} = \mathbf{H}^* \mathbf{\Lambda}\mathbf{H}$ for any structure of $\mathbf{\Lambda}$.
%In \cite[Lemma 1]{mckay_pdf}, the authors derived a simplified version of the marginal PDF of unordered eigenvalue of $\mathbf{W} = \mathbf{H}^* \mathbf{B}\mathbf{H}$ where for the sake of simplicity $\mathbf{B} = \text{diag}\left(\beta_i\right)_{1 \leq i \leq q}$.
\begin{align}
\label{pdf}
	f_{\lambda}\left(\lambda\right) & = \frac{1}{n_t \prod_{i<j}^q\left(\beta_j-\beta_i\right)} \nonumber \\ & \times \sum_{l=1}^{q}\sum_{k=q-n_t+1}^{q} \frac{\lambda^{n_t+k-q-1}e^{-\lambda/\beta_l}\beta_l^{q-n_t-1}}{\Gamma\left(n_t-q+k\right)}D_{l,k},
\end{align}
where  $D_{l,k} = \{\mathbf{D}\}_{l,k}$ is the $\left(l,k\right)th$ cofactor of the Vandermonde $q \times q$ matrix $\mathbf{\Psi}$ whose $\left(m,n\right)$th entry is 
\begin{equation}
\{\mathbf{\Psi}\}_{m,n} = \beta_m^{n-1}.
\end{equation}
%As a first simplification, we start by noticing that
Expressing the inverse of $\boldsymbol{\Psi}$ as
\begin{align*}
\mathbf{\Psi}^{-1} & = \frac{1}{\text{det}\left(\mathbf{\Psi}\right)} \mathbf{D}^T \\
& = \frac{1}{ \prod_{i<j}^q\left(\beta_j-\beta_i\right)}  \mathbf{D}^T,
\end{align*}
 the PDF in (\ref{pdf})  simplifies to
\begin{align}
\label{pdf1}
f_{\lambda}\left(\lambda\right) = \frac{1}{n_t }  \sum_{l=1}^{q}\sum_{k=q-n_t+1}^{q} \frac{\lambda^{n_t+k-q-1}e^{-\lambda/\beta_l}\beta_l^{q-n_t-1}}{\Gamma\left(n_t-q+k\right)}\mathbf{\Psi}^{-1}_{k,l}.
\end{align}
The cumulative density function (CDF) can thus be easily derived as
\begin{equation}
	\label{cdf_mckay}
	F_{\lambda}\left(\lambda\right) = \frac{1}{n_t }  \sum_{l=1}^{q}\sum_{k=q-n_t+1}^{q} \frac{\beta_l^{k-1}\gamma\left(n_t-q+k,\lambda / \beta_l\right)}{\Gamma\left(n_t-q+k\right)}\mathbf{\Psi}^{-1}_{k,l},
\end{equation}
where $\Gamma\left(.\right)$ and $\gamma\left(.,.\right)$ are respectively the standard Gamma and the lower incomplete Gamma functions. \par
Knowing the marginal PDF, it is possible to compute the expected value of any functional $g$ of the eigenvalues of ${\bf W}$. Indeed, we have
\begin{align}
    \mathbb{E}\left[g({\bf W})\right]&\triangleq \frac{1}{n_t}\sum_{i=1}^{n_t}g(\lambda_i({\bf W}))\nonumber\\
    &= \int_{0}^{\infty} g(\lambda)f_\lambda(\lambda)d\lambda,
\label{eq:g}
\end{align}
where $\left\{\lambda_i({\bf W})\right\}_{i=1}^{n_t}$ are the eigenvalues of ${\bf W}$. 
Equation \eqref{eq:g} is very useful in practice as it can be leveraged to  compute performance metrics of many wireless communication and signal processing schemes. Examples include the ergodic capacity,  the SINR at the output of the MMSE receiver and the MSE of the BLUE estimator which correspond respectively to selecting $g(x)$ as $g(x)=\log_2(x+\sigma^2)$, $g(x)=\frac{1}{x+\sigma^2}$ where $\sigma^2$ is the noise variance and $g(x)=x^{-1}$. 

When it comes to numerically compute $\mathbb{E}\left[g({\bf W})\right]$, it is easy to see that, when the eigenvalues  $\left\{\beta_i\right\}$ are very close causing the matrix $\boldsymbol{\Psi}$ to be ill-conditioned, some numerical stability issues might occur.  
%A close look at \eqref{pdf1} reveals that some numerical issues might occur,  when the eigenvalues $\left\{\beta_i\right\}$ are very close, matrix $\boldsymbol{\Psi}$ becomes ill-conditioned, which renders the computation of its inverse numerically instable.
In this work, we show that for some functionals $g$, namely polynomials, it is possible to  evaluate  $ \mathbb{E}\left[g({\bf W})\right]$  in a stable way. This allows us, using moment approximation techniques, to obtain a numerically stable approximation of the marginal PDF. 
We believe that  the same approximation method can  also be extended to approximate $\mathbb{E}\left[g({\bf W})\right]$ for any functional $g$ of interest. 
%Based on moment approximation methods, we propose a numerical stable evaluation of the marginal PDF. We think that our technique can in general  serve as a basis to approximate $\mathbb{E}\left[g({\bf W})\right]$ for the functional $g$ of interest. 
\subsection{A Numerically stable method to compute positive moments}
In this section, we propose a numerically stable technique to compute the positive moments of ${\bf W}$. Let $p\in\mathbb{N}$, the $p$-th moment of matrix ${\bf W}$ is given by
%This simplification though does not help to alleviate the singularity issues coming from the fact that some of the eigenvalues $\beta_i$ are close to each other. However as we will show later the modified expression in (\ref{pdf1}) is the key to obtain the positive moments of $\mathbf{W}$ for any structure of the eigenvalues $\beta_i$. The positive moments are given by 
\begin{equation}
\label{moment}
\begin{split}
\mu_{\mathbf{W}}\left(p\right)& = \mathbb{E}\left[\lambda^p\right]\\ & = \frac{1}{n_t} \mathbb{E}\:\tr \left[\mathbf{W}^p\right] \\ & =\int_{0}^{\infty}\lambda^p f_{\lambda}\left(\lambda\right) \\
& = \frac{1}{n_t} \sum_{k=q-n_t+1}^{q}\frac{\Gamma\left(n_t+p+k-q\right)}{\Gamma\left(n_t-q+k\right)}\sum_{l=1}^{q}\mathbf{\Psi}^{-1}_{k,l}\beta_l^{p+k-1}.
\end{split}
\end{equation}
%where $p$ is a non-negative integer.
In many practical scenarios, numerical instability might originate from the computation of the following quantity
%The  quantity in \eqref{moment} that is numerically instable is 
$$
\sum_{l=1}^{q}\mathbf{\Psi}^{-1}_{k,l}\beta_l^{p+k-1},
$$
%where $\tau\in \llbracket p+q-n_t,p+q-1 \rrbracket$ and $k \llbracket q-n_t+1,q\rrbracket$,
as $\mathbf{\Psi}$ is ill-conditioned. To overcome this issue, we propose an alternative way that avoids computing the inverse of $\boldsymbol{\Psi}$. For $k\in\llbracket 1,q\rrbracket$ and $\tau\in \llbracket p+q-n_t,p+q-1 \rrbracket$, define $\alpha_{k,\tau}$ as
$$
\alpha_{k,\tau}=\sum_{l=1}^{q}\mathbf{\Psi}^{-1}_{k,l}\beta_l^{\tau}.
$$
The basic idea is based on the observation that $\boldsymbol{\alpha}_\tau\triangleq\left[\alpha_{1,\tau},\cdots,\alpha_{q,\tau}\right]^{T}$ is solution to the following linear system
%Note that the formula in (\ref{moment}) is instable due to the inversion of the Vandermonde matrix $\mathbf{\Psi}$ which presents numerical instability especially when $q$ is large and the $\beta_i$ are close to each other. The aim now is to compute $\sum_{l=1}^{q}\mathbf{\Psi}^{-1}_{k,l}\beta_l^{p+k-1}$ without the need to invert $\mathbf{\Psi}$.
%We would like to compute the following sum 
%\begin{equation}
%\alpha_k = 	\sum_{l=1}^{q}\mathbf{\Psi}^{-1}_{k,l}\beta_l^{\tau},
%\end{equation}
%where $\tau \in \mathbb{N}$. First of all, note that $\mathbf{\underline{\alpha}}=\left(\alpha_k\right)_{1 \leq k \leq q}$ is solution to the following linear system
\begin{equation}
\label{system}
\mathbf{\Psi} \boldsymbol{\alpha}_\tau = \boldsymbol{\beta}_\tau.
\end{equation}
where $\boldsymbol{\beta}_\tau=\left[\beta_1^{\tau},\cdots,\beta_{q}^{\tau}\right]^{T}$.
If $0\leq \tau \leq q$, then a straightforward solution to \eqref{system} is given by $\boldsymbol{\alpha}_\tau=\left[{\bf 0}_{\tau\times 1}^{T},1,{\bf 0}_{q-\tau\times 1}^{T}\right]^{T}$. From now on, we assume that $\tau > q$.  

Writing \eqref{system} in the following equivalent way
$$
\beta_k^{\tau} =\sum_{l=1}^q \beta_k^l \alpha_{l,\tau},\:\: k=1,\cdots,q
$$
we can easily see that $\left\{\beta_i\right\}_{i=1}^q$  are roots of the following polynomial:
%In the case, where $0 \leq \tau \leq q-1$, a straightforward solution of (\ref{system}) is given by $\mathbf{e}_{\tau+1}$, where $\mathbf{e}_k$ is the indicator vector given by
%\begin{equation}
%	\mathbf{e}_k = \left[\mathbf{0}_{k-1}^t,1,\mathbf{0}_{q-k}^t\right]^t, 1\leq k \leq q.
%\end{equation}
%In the case where $\tau \geq q$, we notice that $\beta_i$, $i=1, \cdots, q$ are roots of the following polynomial
\begin{equation}
	\label{polynom}
    P\left(X\right) = \sum_{k=1}^{q}\alpha_{k,\tau} X^{k-1} - X^{\tau}.
\end{equation}
%The idea now is to determine the coefficients of $P$ given by the $\alpha_k$ without referring to any matrix inversion. Since, $\left(\beta_i\right)_{1\leq i \leq q}$ are roots of $P$, then $P$ can be rewritten as 
Hence, there exists $Q(X)$ a polynomial with degree $\tau-q$ such that:
\begin{equation}
	P\left(X\right) = Q\left(X\right)\prod_{i=1}^q \left(X-\beta_i\right),
\end{equation}
Note that exact knowledge of $P(X)$ leads to the determination of the unknown coefficients $\alpha_{k,\tau}$, since they are by construction  among the coefficients of $P(X)$. To fully characterize $P(X)$, we first observe that 
\begin{itemize}
    \item  the coefficients of $P$ associated  with exponents $X^q, X^{q+1}, \cdots,X^{\tau-1}$ are all zero.
    \item the coefficient associated with $X^{\tau}=-1$. 
    \end{itemize}
    Let   $\left\{a_i\right\}_{i=1}^{q+1}$ be the coefficients of $\prod_{i=1}^q \left(X-\beta_i\right)$ (i.e, $\prod_{i=1}^q
\left(X-\beta_i\right)=\sum_{i=1}^{q+1}a_iX^{i-1}$), which can be exactly obtained using the Newton-Girard algorithm \cite{girard_newton}. Let
$\left\{b_i\right\}_{i=1}^{\tau-q+1}$ be the coefficients of $Q(X)$ so that:
%Using the Newton-Girard algorithm, the coefficients $\left(a_i\right)_{1 \leq i \leq q+1}$ of $\prod_{i=1}^q \left(X-\beta_i\right)$ can be obtained. As a quick observation, we notice that the coefficients of $P$ associated  with exponents $X^q, X^{q+1}, \cdots,X^{\tau-1}$ are all zero. This observation is the key to determine the coefficients of $Q$. As a matter of fact, assume that 
$	Q\left(X\right) = \sum_{k=1}^{\tau - q+1}b_k X^{k-1}$. From the available information about the coefficients of $P$, we can show that %Since the coefficients of $P$ associated  with exponents $X^q, X^{q+1}, \cdots,X^{\tau-1}$,  
$\left\{b_k\right\}_{k=1}^{\tau-q+1}$  satisfy the following set of equations
\begin{equation}
\begin{cases}
    &	a_{q+1}b_1 + a_qb_2 + \cdots a_{2q+1-\tau}b_{\tau-q+1} = 0 \\ 
    &a_{q+1}b_2 + a_qb_3 + \cdots a_{2q+2-\tau}b_{\tau-q+1} = 0 \\
    &\hspace{2cm}\:\:\:\:
\: \vdots\\ 
&a_{q+1}b_{\tau-q+1} = -1. 
\end{cases}
\label{eq:sys}
\end{equation}
where we use the convention that $a_j=0$ if $j\leq 0$ or $j>q+1$. %and $b_i=0$ whenever $i\leq 0$.   
The  system of equation in \eqref{eq:sys} can be also expressed in the following matrix form:
\begin{equation} \label{gaussian_system}
	\mathbf{\Phi} \begin{bmatrix}
	b_1\\b_2 
	\\ \vdots
	\\ b_{\tau-q+1}
	
	\end{bmatrix} =\begin{bmatrix}
	0\\0
	\\ \vdots
	\\0 \\ -1
	\end{bmatrix} ,
\end{equation}
where $\mathbf{\Phi}$ is the upper triangular matrix given by
\begin{equation}
	\mathbf{\Phi} = \begin{bmatrix}
        a_{q+1} &a_q  &\cdots  & &\cdots a_{2q+1-\tau}  \\ 
        0&a_{q+1}  & a_q & \cdots &\cdots a_{2q+2-\tau}    \\ 
        \vdots&  \ddots&  \ddots & &\vdots\\ 
        \vdots&  & \ddots & \ddots &a_q  \\ 
%	\vdots& \vdots & \vdots &  \ddots&\ddots  &\ddots   \\ 
%	\vdots & \vdots &\vdots  &  &  \ddots&  \ddots& a_q \\ 
        0&\cdots  && 0 & a_{q+1}  
	\end{bmatrix}.
\end{equation}
Vector ${\bf b}=\left[b_1,\cdots,b_{\tau-q+1}\right]^{T}$ can be thus determined by taking the inverse of matrix $\boldsymbol{\Phi}$ as
% Note this linear system is much stable then the one given in (\ref{system}) and thus we obtain 
 \begin{equation}
 \begin{bmatrix}
 b_1\\b_2 
 \\ \vdots
 \\ b_{\tau-q+1}
 
 \end{bmatrix} = \mathbf{\Phi} \setminus \begin{bmatrix}
 0\\0
 \\ \vdots
 \\0 \\ -1
 
 \end{bmatrix}.
 \end{equation}
 From a numerical standpoint, this operation, involving inversion of an upper triangular matrix,  can be solved in a stable fashion using back-substitution algorithm and is, as such, much more stable than the inversion of matrix $\boldsymbol{\Psi}$  required in the evaluation of \eqref{moment}. Once coefficients $\left\{b_i\right\}_{i=1}^{\tau-q+1}$ are obtained, $\left\{\alpha_{k,\tau}\right\}$ can be evaluated as \footnote{This can be seen by using the fact that
 $P(X)=\sum_{i=1}^{q+1}\sum_{k=1}^{\tau-q+1}a_ib_k X^{i+k-2}$.}
 $$
 \alpha_{j,\tau}=\sum_{k=1}^{\tau-q+1} b_ka_{j+2-k}. 
$$
 %$ $\boldsymbol{\Psi}$ in \eqref{moment}.  
 %Thus the linear system in (\ref{gaussian_system})  can be solved recursively without the need to any matrix inversion. At the end, we have full knowledge of the coefficients of $Q$ and thus we can determine the coefficients of $P$. As a result the $\alpha_k = \sum_{l=1}^{q}\mathbf{\Psi}^{-1}_{k,l}\beta_l^{\tau}$ can be computed efficiently. As a conclusion, it is possible to obtain the positive moments of $\mathbf{W}$ for any structure of the eigenvalues $\beta_i$ and for any dimension. \par 

To validate our procedure, we compute the positive moments of the Gram matrix $\mathbf{W}$ in the case where the correlation matrix $\mathbf{\Lambda}$ follows the following model \cite{khalil_tsp}
 \begin{equation}
 \label{toeplitz}
 \mathbf{\Lambda}= \left(1-\xi\right)\diag\left(1,\xi,\xi^2,\cdots,\xi^{q-1}\right) , \:\: 0 \leq \xi \leq 1,
 \end{equation}
 where the coefficient $\xi$ indicates the forgetting factor. This kind of matrices arise in covariance matrix estimation and more precisely in exponentially weighted sample covariance matrix (more details can be found in \cite{khalil_tsp}, section III-B)
 . Note that for moderate to large values of $q$, the eigenvalues of $\mathbf{\Lambda}$ given by $\left(1-\xi\right), \left(1-\xi\right)\xi, \cdots, \left(1-\xi\right)\xi^{q-1}$ are very close to each other,  which might cause singularity issues when using the formula in (\ref{pdf}). We consider two different configurations $\text{config}_1$ and $\text{config}_2$ corresponding respectively to $(n_t=3, q=5)$ and $(n_t=3, q=20)$. For both configurations, we evaluate the moments using 
 \eqref{pdf} and the proposed method. We compare the obtained moments with the empirical ones evaluated over $10^{6}$ realizations. 
 
 The results are summarized in Table \ref{moments_table}. 
 As a first observation, we notice that our method provides very close results to the empirical moments while the evaluation of the moments using \eqref{pdf} becomes totally inaccurate in configuration $\text{config}_2$ associated with a higher $q$.    %a very accurate result as compared to the formula in (\ref{pdf}) \cite{mckay_pdf} and the empirical method (based on $10^6$ realizations) for both configurations $c_1$ and $c_2$. Second, as we increase $q$ ($q=20$), the exact approach presents numerical issues and provides a totally non accurate values for all moments order $p$.
 This clearly demonstrates the efficiency and the accuracy of our method in calculating the positive moments.

\begin{table}[]
	\centering
	\caption{Positive moments evaluation of the Gram matrix $\mathbf{W}$ with $\mathbf{\Lambda}$ as in (\ref{toeplitz}) with $\xi=0.85$. Both settings are considered: $\text{config}_1: n_t=3, q=5$ and $\text{config}_2: n_t=3, q=20$ for different moment order values, $p$.}
	\label{moments_table}
	\begin{tabular}{c|c|c|c|}
		\cline{2-4}
		&   Formula in \cite{mckay_pdf}    & Empirical($10^6$ realizations) & Proposed  \\ \hline
		\multicolumn{1}{|c|}{$\text{config}_1$, $p=1$} & 0.5563          & 0.5563    & 0.5562    \\ \hline
		\multicolumn{1}{|c|}{\cellcolor[HTML]{C0C0C0} $\text{config}_2$, $p=1$} &\cellcolor[HTML]{C0C0C0} -2.1177e+06 & 0.9613    & 0.9612        \\ \hline
		\multicolumn{1}{|c|}{$\text{config}_1$, $p=5$} & 1.1029    & 1.1031   & 1.1032   \\ \hline
		\multicolumn{1}{|c|}{\cellcolor[HTML]{C0C0C0}$\text{config}_2$, $p=5$} &\cellcolor[HTML]{C0C0C0} 7.7212e+03 & 4.7575    & 4.7562    \\ \hline
		\multicolumn{1}{|c|}{$\text{config}_1$, $p=8$} & 5.3799  & 5.4332 & 5.3989 \\ \hline
		\multicolumn{1}{|c|}{\cellcolor[HTML]{C0C0C0}$\text{config}_2$, $p=8$} & \cellcolor[HTML]{C0C0C0} 5.4568e+04 & 37.3178   & 37.47   \\ \hline
	\end{tabular}
\end{table}
%\begin{table}[]
%	\centering
%	\caption{My caption}
%	\label{my-label}
%
%\end{table}
 \section{Moment-Based Approach for Density Approximation}\label{PDF approximation}
 In this section, we show that  the knowledge of all positive moments $\mu_{\mathbf{W}}\left(k\right)$, $k=1,2,\cdots$ can be leveraged to approximate the PDF $f_{\lambda}$. In general, retrieving a positive PDF from the knowledge of all its moments is  known as the Stieltjes moment  \cite{rao_book,krein} problem. We say that a PDF is called M-determinate if it can be uniquely determined by its  moments. A sufficient condition for a PDF to be M-determinate is given by the Krein and the Lin
 conditions summarized below     %This is known as the Stieltjes moment problem \cite{rao_book,krein}. In the following, we provide a sufficient condition known as the Krein condition for the PDF $f_{\lambda}$ to be \emph{M-determinate} \footnote{\emph{M-determinate} distributions are defined as distributions that can be recovered using their moments(See \cite{rao_book,krein})}
 \begin{theorem}\cite{krein}
     Let $f$ be a distribution defined in the real half-line $\left(0,\infty\right)$.  	If the following conditions are satisfied:
 	\begin{enumerate}
 		\item The Krein condition:
 	\begin{equation}
        \int_0^{\infty} \frac{-\log f\left(\lambda^2\right)}{1+\lambda^2}d\lambda=\infty
 	\end{equation}
 	\item The Lin condition: $f$ is differentiable and 
 	\begin{equation}
        \lim_{\lambda \rightarrow \infty}\frac{-\lambda \frac{\partial f\left(\lambda\right)}{\partial \lambda}}{f\left(\lambda\right)}= \infty
 	\end{equation}
 \end{enumerate}
 Then, $f_{\lambda}$ is M-determinate.
 \end{theorem}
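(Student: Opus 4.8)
The plan is to recognize the statement as a classical criterion for the Stieltjes moment problem and to reduce it, through the substitution $\lambda = x^2$, to the corresponding criterion for symmetric densities on the whole real line. The first step is to pass from $f$, a density on $(0,\infty)$ with all moments finite, to the symmetric density
\begin{equation*}
g(x) \;=\; |x|\, f(x^2), \qquad x \in \mathbb{R},
\end{equation*}
which is a genuine probability density because $\int_{\mathbb{R}} |x|\, f(x^2)\, dx = \int_0^\infty f(u)\, du = 1$. If $Y$ has density $g$, then $Y^2$ has density $f$; moreover the odd moments of $Y$ vanish and $\mathbb{E}[Y^{2k}] = \mathbb{E}[(Y^2)^k]$, so any probability measure on $\mathbb{R}$ sharing the moment sequence of $Y$ is symmetric and is carried by $z \mapsto z^2$ onto a probability measure sharing the moment sequence of $Y^2$. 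Hence $f$ is M-determinate as soon as $g$ is, and it suffices to verify the two hypotheses for $g$ and to invoke the line version.

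Transferring the hypotheses to $g$ is routine. For the Krein condition, write $-\log g(x) = -\log|x| - \log f(x^2)$ and note that $\int_{\mathbb{R}} \frac{-\log|x|}{1+x^2}\, dx = 2\int_0^\infty \frac{-\log x}{1+x^2}\, dx = 0$ (use $x \mapsto 1/x$), so this term is harmless and the divergence of $\int_{\mathbb{R}} \frac{-\log g(x)}{1+x^2}\, dx$ is equivalent to the divergence of $\int_0^\infty \frac{-\log f(x^2)}{1+x^2}\, dx$, i.e.\ to the first hypothesis. For the Lin condition, a direct differentiation gives, with $t = x^2$,
\begin{equation*}
\frac{-x\, g'(x)}{g(x)} \;=\; -1 + 2\,\frac{-t\, f'(t)}{f(t)} \;\longrightarrow\; \infty \qquad (x \to \infty),
\end{equation*}
and the eventual positivity, eventual differentiability and eventual monotonicity of this ratio --- the standing regularity parts of the Lin condition --- transfer to $g$ in the same way.

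What remains is the genuine obstacle and the analytic heart of the theorem: the statement for symmetric densities on the line, namely that such a $g$, eventually positive and differentiable with $-x g'(x)/g(x)$ eventually increasing to $+\infty$ and with $\int_{\mathbb{R}} \frac{-\log g(x)}{1+x^2}\, dx = \infty$, is M-determinate. This is Lin's theorem; Krein's original result yields only the converse implication --- \emph{convergence} of this integral forces \emph{in}determinacy --- so its divergence is necessary but not by itself sufficient, and the Lin monotonicity hypothesis is exactly what upgrades it to a sufficient condition. The route I would follow is the classical one: indeterminacy of a measure with all moments is equivalent to the non-triviality of an associated Nevanlinna/de Branges class of entire functions of minimal exponential type attached to the moment sequence, and divergence of the logarithmic (Krein) integral forces that class to be trivial via a Denjoy--Carleman quasi-analyticity argument --- the theory of the logarithmic integral --- with the Lin condition providing the regularity that makes the pointwise tail estimate on $g$ usable there. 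An alternative in principle would be to estimate the moments $\mu_{\mathbf{W}}(2n)$ from the tail of $f$ and apply Carleman's test $\sum_n \mu_{\mathbf{W}}(2n)^{-1/2n} = \infty$, but controlling the moments from the density is itself delicate, so the reduction to the line criterion is the cleaner path.
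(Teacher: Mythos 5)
The first thing to note is that the paper offers no proof of this theorem at all: it is stated as a quoted result from the literature (the citation to Krein/Lin), and the only proof in the paper is of Proposition~1, which merely \emph{verifies} the two conditions for the specific density in~(\ref{pdf1}). So there is no in-paper argument to compare yours against, and your proposal should be judged as a standalone attempt.

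As such, your reduction from the half-line to the line is the standard one and is essentially correct: $g(x)=|x|f(x^2)$ is a symmetric density whose square is distributed as $f$, the Krein integrals match because $\int_0^\infty \frac{\log x}{1+x^2}\,dx=0$, and the computation $-xg'(x)/g(x)=-1+2\bigl(-tf'(t)/f(t)\bigr)$ with $t=x^2$ is right. One logical slip: you assert that any measure sharing the moments of $Y$ is symmetric, which is not true in general (it is a consequence of determinacy, not a tool for proving it); what the argument actually needs is the converse construction --- given a measure $\mu$ on $[0,\infty)$ with the moments of $f$, push it forward under $t\mapsto\pm\sqrt{t}$ with weight $\tfrac12$ each to obtain a \emph{symmetric} measure with the moments of $g$, then invoke determinacy of $g$. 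That is easily repaired. The genuine gap is that the proposal stops exactly where the theorem begins: the ``analytic heart'' --- that a symmetric density on $\mathbb{R}$ with divergent logarithmic integral and the Lin regularity is Hamburger-determinate --- is not proved but only attributed to Lin, with a description of the machinery (Nevanlinna classes, quasi-analyticity) rather than an argument. You are candid about this, and you correctly observe that Krein's own theorem gives only the indeterminacy direction so that the Lin condition is doing real work; but a reduction of a cited theorem to another cited theorem is not a proof. You also implicitly strengthen the hypothesis by requiring $-\lambda f'(\lambda)/f(\lambda)$ to be \emph{eventually monotone increasing} to $\infty$, which is indeed part of the standard Lin condition but is absent from the statement as printed in the paper --- worth flagging, since without some such regularity the limit condition alone is not known to suffice.
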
  
 \begin{proposition} \label{proposition1}
 	The PDF in (\ref{pdf1}) is M-determinate.
 \end{proposition}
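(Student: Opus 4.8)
The plan is to verify the two sufficient conditions of the above Theorem (Krein and Lin) directly for $f:=f_\lambda$ as given in \eqref{pdf1}. The starting point is the structural remark that $f$ is an \emph{exponential polynomial}: in \eqref{pdf1} the exponent $n_t+k-q-1$ runs exactly over $\{0,1,\dots,n_t-1\}$ as $k$ runs from $q-n_t+1$ to $q$, so
\begin{equation*}
f(\lambda)=\sum_{l=1}^{q} e^{-\lambda/\beta_l}\,p_l(\lambda),
\end{equation*}
where each $p_l$ is a polynomial of degree at most $n_t-1$. Since $f$ is a bona fide probability density ($f\ge 0$ on $(0,\infty)$ and $\int_0^\infty f(\lambda)\,d\lambda=1$), the polynomials $p_l$ cannot all vanish; I let $l_0$ be the largest index with $p_{l_0}\not\equiv 0$, write $d:=\deg p_{l_0}\in\{0,\dots,n_t-1\}$, and let $c$ be the leading coefficient of $p_{l_0}$.

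The key step is the tail estimate. As $\lambda\to\infty$ the block $l=l_0$ dominates: the terms with $l<l_0$ carry an exponential $e^{-\lambda/\beta_l}$ with $\beta_l<\beta_{l_0}$ and are therefore exponentially smaller than $e^{-\lambda/\beta_{l_0}}$, while inside the $l_0$ block only the top monomial survives. Hence
\begin{equation*}
f(\lambda)=c\,\lambda^{d}\,e^{-\lambda/\beta_{l_0}}\bigl(1+o(1)\bigr),\qquad \lambda\to\infty,
\end{equation*}
and moreover $c>0$, since otherwise $f$ would be negative for all large $\lambda$, contradicting $f\ge 0$. This asymptotic is the one point deserving care, and I expect it to be the main (though mild) obstacle: one must rule out a cancellation that would kill the putative leading term and then fix the sign of what remains --- and it is precisely the nonnegativity of the density that supplies the sign.

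With the tail in hand the two conditions follow. For the Krein condition, the estimate gives $-\log f(\lambda^2)=\lambda^2/\beta_{l_0}-2d\log\lambda-\log c+o(1)$, so $\dfrac{-\log f(\lambda^2)}{1+\lambda^2}\to\dfrac{1}{\beta_{l_0}}>0$ as $\lambda\to\infty$; thus the integrand is bounded below by a positive constant for large $\lambda$, while over any compact $[0,\Lambda_0]$ the integral is finite (the only possible singularities of $-\log f$ arise from isolated zeros of the exponential polynomial $f$ and are integrable logarithmic ones), so the Krein integral equals $+\infty$. For the Lin condition, $f$ is $C^\infty$ on $(0,\infty)$ and, differentiating the representation above, $f'(\lambda)=\sum_l e^{-\lambda/\beta_l}\bigl(p_l'(\lambda)-\beta_l^{-1}p_l(\lambda)\bigr)$; the block $l_0$ again dominates, giving
\begin{equation*}
\frac{-\lambda f'(\lambda)}{f(\lambda)}=\frac{\lambda}{\beta_{l_0}}-\frac{\lambda\,p_{l_0}'(\lambda)}{p_{l_0}(\lambda)}+o(1)=\frac{\lambda}{\beta_{l_0}}-d+o(1)\longrightarrow\infty\quad\text{as }\lambda\to\infty .
\end{equation*}
Both hypotheses of the Theorem holding, $f_\lambda$ is M-determinate, as claimed.
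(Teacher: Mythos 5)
Your proof is correct, and although it targets the same two sufficient conditions (Krein and Lin) as the paper --- that much is forced by the theorem --- the way you verify them is genuinely different. The paper keeps the double sum $f_\lambda(\lambda)=\sum_{l,k}\gamma_{k,l}\,\lambda^{n_t+k-q-1}e^{-\lambda/\beta_l}$ and argues by termwise inequalities: it replaces every $e^{-\lambda/\beta_l}$ by $e^{-\lambda/\beta_q}$ to get $f_\lambda(\lambda^2)\le e^{-\lambda^2/\beta_q}\cdot(\text{polynomial})$ for Krein, and bounds $-\lambda f_\lambda'(\lambda)\ge -(n_t-1)f_\lambda(\lambda)+\tfrac{\lambda}{\beta_q}f_\lambda(\lambda)$ for Lin. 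Those termwise bounds are literally valid only if all the coefficients $\gamma_{k,l}$ are nonnegative, which is not guaranteed (they involve entries of the inverse Vandermonde matrix, whose signs alternate); the paper leaves this point implicit. Your route --- regrouping into $f(\lambda)=\sum_l e^{-\lambda/\beta_l}p_l(\lambda)$, extracting the exact leading asymptotic $f(\lambda)\sim c\,\lambda^d e^{-\lambda/\beta_{l_0}}$ with $c>0$ forced by nonnegativity of the density, and reading both conditions off that expansion --- sidesteps the sign issue entirely, and it also correctly identifies the relevant decay rate as $\beta_{l_0}$ (the largest eigenvalue whose polynomial block survives) rather than $\beta_q$; the two coincide unless $p_q\equiv 0$. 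The price is a little extra bookkeeping on compact sets in the Krein integral, which you handle adequately (for the conclusion one only needs that piece to be bounded below, which follows from $f$ being bounded there). In short, both proofs land in the same place, but yours buys rigor exactly where the paper's termwise inequalities are shaky.
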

\begin{proof}
	See Appendix for a proof.
\end{proof}
Now that we prove that $f_{\lambda}$ is M-determinate, an approximation of the marginal density, involving laguerre polynomials,  can be derived as \cite{provost}: 
% Now, as we have proven that $f_{\lambda}$ is M-determinate, we need to find another representation of $f_{\lambda}$ that incorporates its moments $\mu_{\mathbf{W}}\left(k\right)$, $k=1,2,\cdots$. This question can be answered using \emph{Laguerre} polynomials as proposed in \cite{provost}. Indeed, we have
 \begin{equation} \label{apprx_provost}
 f_{\lambda}\left(\lambda\right) =\frac{ \lambda^{\nu} e^{-\lambda / c}}{c^{\nu+1}} \sum_{i=0}^{\infty} \delta_i \mathcal{L}_i\left(\nu,\lambda / c\right),
 \end{equation}
 where $c= \frac{\mu_{\mathbf{W}}\left(2\right)-\mu_{\mathbf{W}}\left(1\right)^2}{\mu_{\mathbf{W}}\left(1\right)}$, $\nu = \frac{\mu_{\mathbf{W}}\left(1\right)}{c} -1$,
 \begin{equation}
 \mathcal{L}_i\left(\nu,\lambda\right) = \sum_{k=0}^{i} \left(-1\right)^k \frac{\Gamma\left(\nu+i+1\right)\lambda^{i-k}}{k! \left(i-k\right)!\: \Gamma\left(\nu+i-k+1\right)},
 \end{equation}
 is the \emph{Laguerre} polynomial of order $i$ in $\lambda$ and parameter $\nu$ and 
 \begin{equation}
 \delta_i = \sum_{k=0}^{i} \frac{\left(-1\right)^k}{c^{i-k}}  \frac{i!}{k! \left(i-k\right)! \: \Gamma\left(\nu+i-k+1\right)}\mu_{\mathbf{W}}\left(i-k\right).
 \end{equation}
 Truncating the series in \eqref{apprx_provost} at order $K$ yields the following approximation for the marginal PDF
 %Therefore, as an approximation we can truncate the sum in (\ref{apprx_provost}) up to an order $K$, i.e.,
 \begin{equation}
f_{\lambda,K}\left(\lambda\right) =\frac{ \lambda^{\nu} e^{-\lambda / c}}{c^{\nu+1}} \sum_{i=0}^{K} \delta_i \mathcal{L}_i\left(\nu,\lambda /c\right).
 \end{equation}
 The CDF can thus be approximated as follows
  \begin{equation}
  \begin{split}
  F_{\lambda,K}\left(\lambda\right) = \sum_{i=0}^{K} \delta_i \sum_{k=0}^{i} & \left(-1\right)^k  \frac{\gamma\left(i+\nu-k+1,\lambda / c\right)}{k!\left(i-k\right)!}.
  \end{split}
  \end{equation}
 \section{Selected numerical Results} \label{results}
In this section, we investigate the accuracy of the proposed PDF and CDF moment-based approach approximation. To this end, we compare them with their empirical counterparts and those evaluated using the results in \cite{mckay_pdf}.
\begin{figure}[h!]
	\centering
	\includegraphics[scale=0.33]{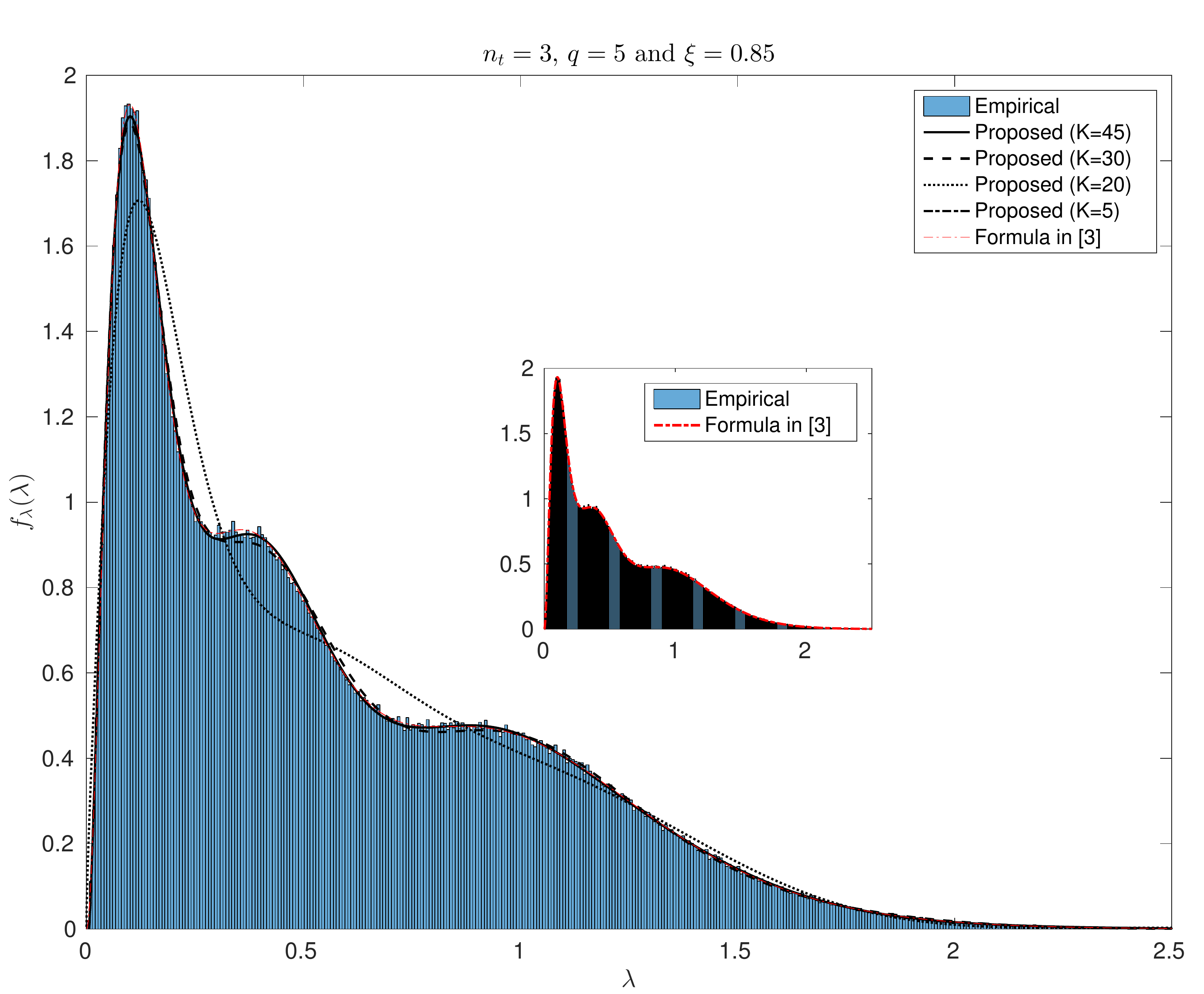}
	\caption{PDF of an unordered eigenvalue of $\mathbf{W}$ with $n_t=3$, $q=5$ and $\xi=0.85$.}
	\label{fig:small_network}
\end{figure}
\begin{figure}[h!]
	\centering
	\includegraphics[scale=0.32]{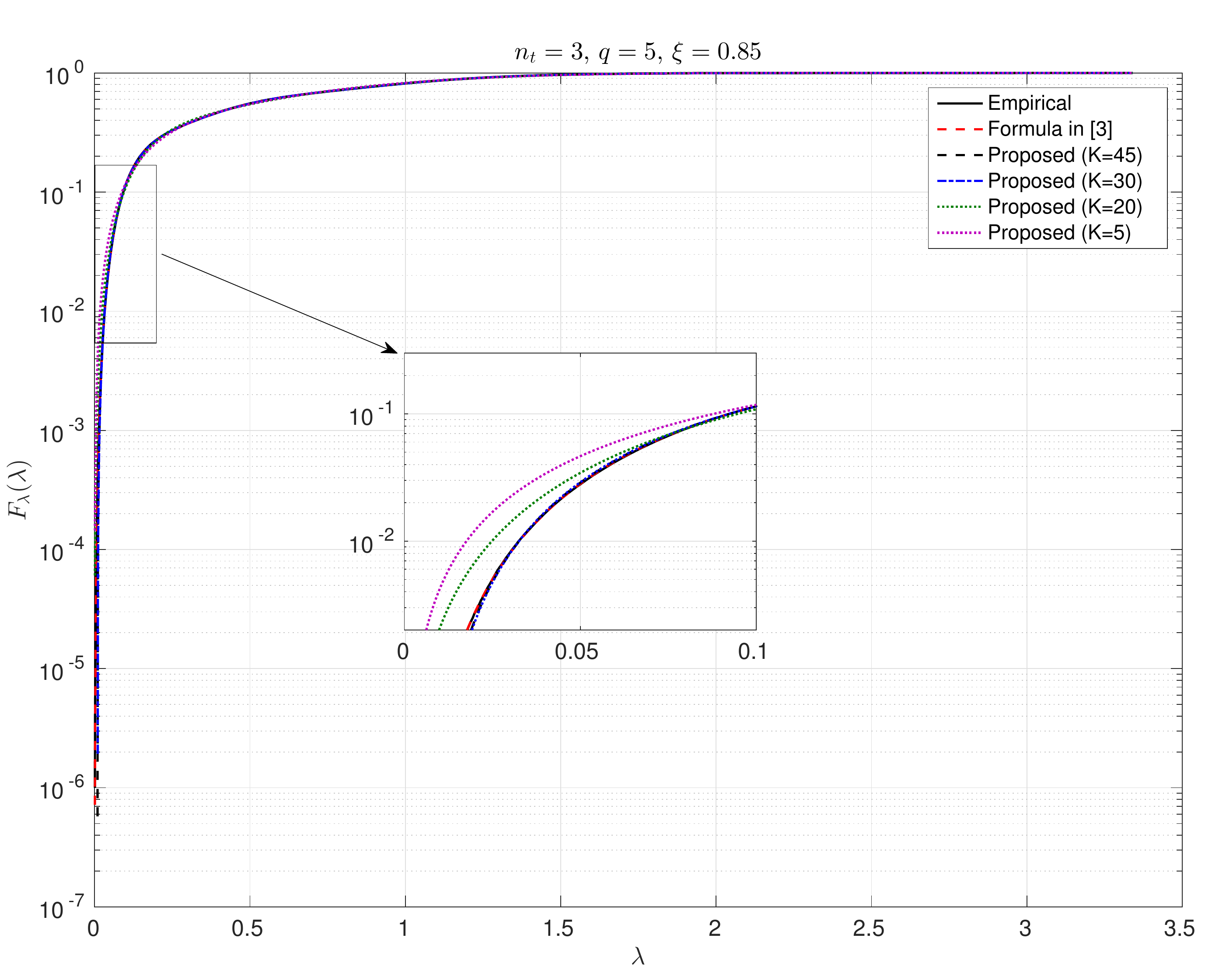}
	\caption{CDF of an unordered eigenvalue of $\mathbf{W}$ with $n_t=3$, $q=5$ and $\xi=0.85$.}
	\label{fig:cdf_small_network}
\end{figure}
\begin{figure}[h!]
	\centering
	\includegraphics[scale=0.31]{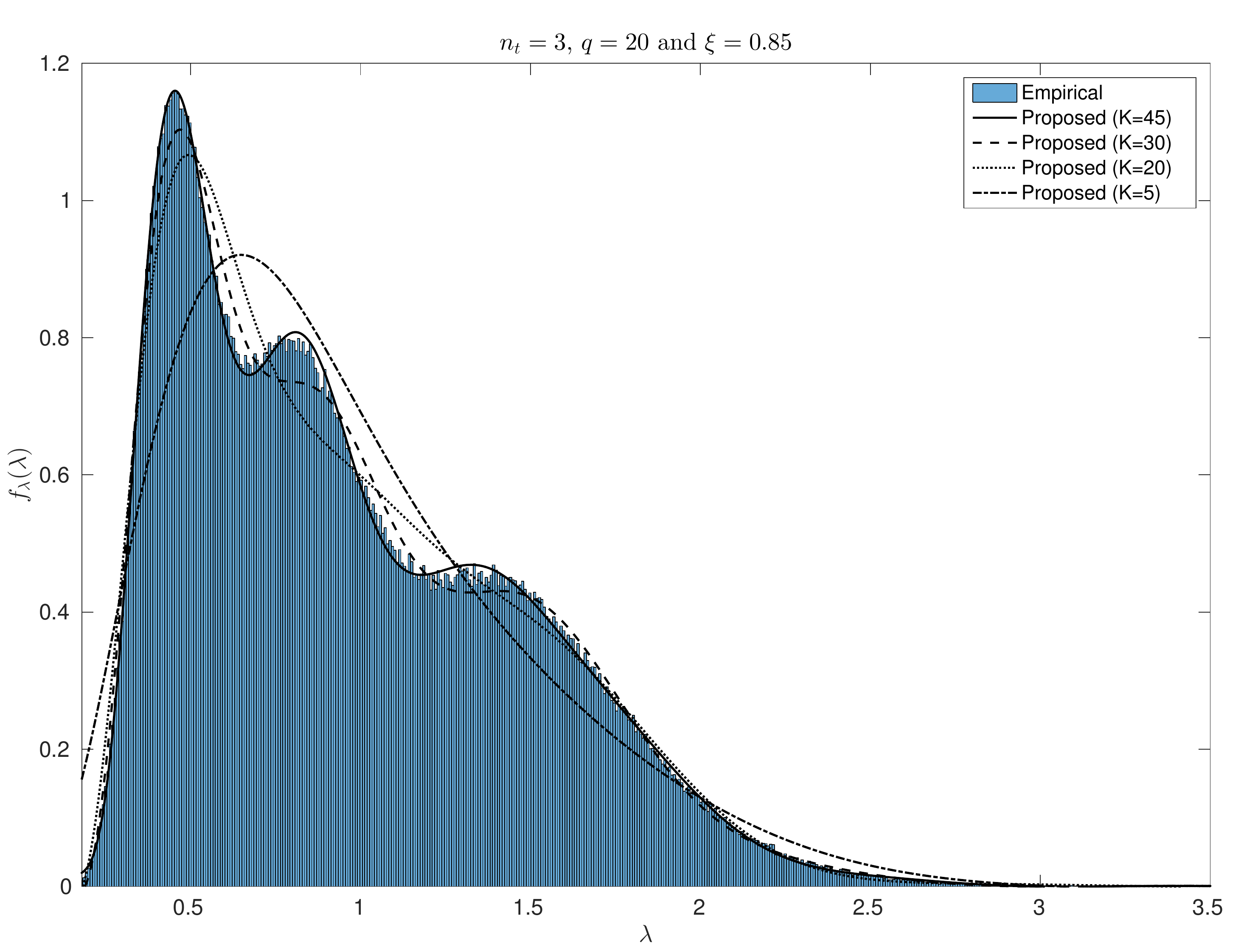}
	\caption{PDF of an unordered eigenvalue of $\mathbf{W}$ with $n_t=3$, $q=20$ and $\xi=0.85$. The plot for the exact formula provided in \cite{mckay_pdf} is omitted due to singularity issues.}
	\label{fig:large_network}
\end{figure}
\begin{figure}[h!]
	\centering
	\includegraphics[scale=0.33]{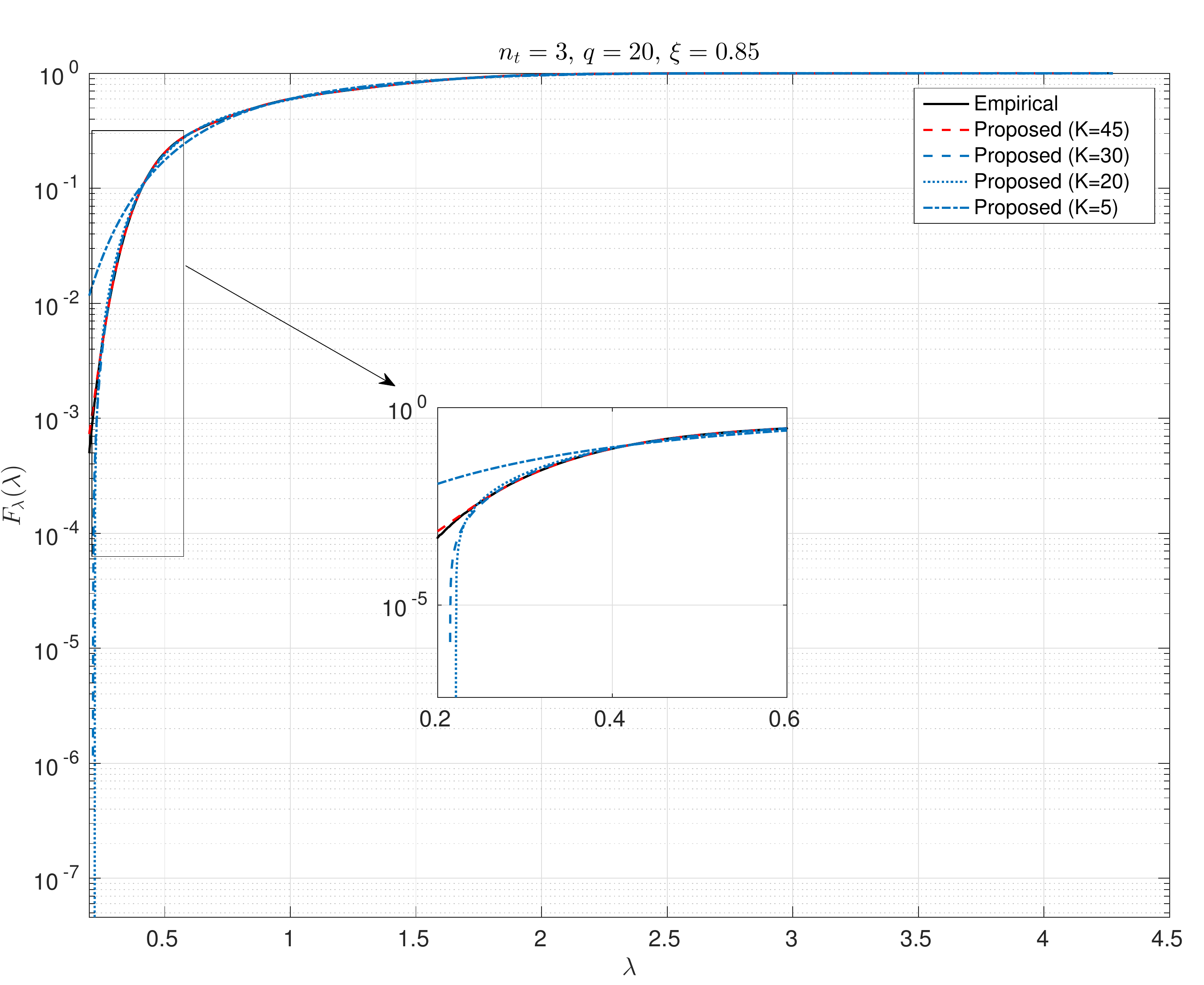}
	\caption{CDF of an unordered eigenvalue of $\mathbf{W}$ with $n_t=3$, $q=20$ and $\xi=0.85$. The plot for the exact formula provided in \cite{mckay_pdf} is omitted due to singularity issues.}
	\label{fig:cdf_large_network}
\end{figure}
\par
In Figure \ref{fig:small_network}, we assume that $\mathbf{\Lambda}$ follows the same model as in (\ref{toeplitz}) with $\xi=0.85$, $n_t=3$ and $q=5$. We compare the accuracy of our approach with the corresponding empirical density and the formula provided in \cite{mckay_pdf}. It can be noticed that our approximation becomes more accurate by increasing the truncation order $K$. As evidenced from Figures \ref{fig:small_network} and \ref{fig:cdf_small_network}, a good approximation
can be achieved starting from $K=30$ for both PDF and CDF. \par 
In Figures \ref{fig:large_network} and \ref{fig:cdf_large_network}, we increase the value of $q$ to $q=20$. In this case, the formula provided in \cite{mckay_pdf} presents severe numerical instability and thus could not be plotted in this case. On the other hand, our moment-based approach achieves a very good approximation starting from $K=30$. For $K=45$, we can see that we have a perfect match with the empirical PDF and CDF. 
 \section{Conclusion} \label{conclusion}
In this paper, we propose a numerically stable method that efficiently compute the positive moments of one-side correlated Gram matrices. From a practical standpoint, these moments can be used to approximate the marginal distribution and CDF of the eigenvalues of Large Gram random matrices and thus constitute an efficient alternative to conventional methods which become highly inaccurate in high dimensional settings. %high-dimensional settings where conventional approaches fail to do so due to stability and singularity issues.
\bibliographystyle{IEEEtran}
\bibliography{References}
\section*{Appendix (Proof of Proposition \ref{proposition1})}
\subsection*{Krein condition} We start by rewrite the PDF in (\ref{pdf1}) as 
\begin{equation} \label{modified}
f_{\lambda}(\lambda) = \sum_{l=1}^q\sum_{k=q-n_t+1}^{q} \gamma_{k,l} \lambda^{n_t+k-q-1}e^{-\lambda / \beta_l},
\end{equation}
where $\gamma_{k,l} =  \frac{\beta_l^{q-n_t-1} \mathbf{\Psi}_{k,l}^{-1}}{n_t \:\Gamma(n_t-q+k)}$. Then,
\begin{equation}
\begin{split}
f_{\lambda}(\lambda^2) & = \sum_{l=1}^q\sum_{k=q-s+1}^{q} \gamma_{k,l} \lambda^{2(n_t+k-q-1)}e^{-\lambda^2 / \beta_l} \\
& \leq e^{-\lambda^2 / \beta_q}\sum_{l=1}^q\sum_{k=q-s+1}^{q} \gamma_{k,l} \lambda^{2(n_t+k-q-1)}.
\end{split}
\end{equation}
Thus,
\begin{equation}
-\log\left(f_{\lambda}(\lambda^2)\right) \geq \lambda^2 / \beta_q - \log\left(\sum_{l=1}^q\sum_{k=q-n_t+1}^{q} \gamma_{k,l} \lambda^{2(n_t+k-q-1)}\right).
\end{equation}
and 
\begin{equation}
\begin{split}
\frac{-\log\left(f_{\lambda}(\lambda^2)\right)}{1+\lambda^2} & \geq \frac{\lambda^2}{\beta_q\left(1+\lambda^2\right)} \\
& -\frac{\log\left(\sum_{l=1}^q\sum_{k=q-n_t+1}^{q} \gamma_{k,l} \lambda^{2(n_t+k-q-1)}\right)}{1+\lambda^2}
\end{split}.
\end{equation}
Integrating the first term of the right-hand side term provides infinity while integrating the second term results in a finite value. Thus, the integral diverges to infinity which fulfills the Krein condition. %Which verifies the Krein condition.
\subsection*{Lin condition}
Using the modified expression in (\ref{modified}), we have
\begin{equation}
\begin{split}
\frac{\partial f_{\lambda}(\lambda)}{\partial \lambda} &= \sum_{l=1}^q\sum_{k=q-n_t+1}^{q} \gamma_{k,l} \Biggl[(n_t+k-q-1)\lambda^{n_t+k-q-2}e^{-\lambda /\beta_l}\\
& -1/\beta_l \lambda^{n_t+k-q-1}e^{-\lambda /\beta_l}\Biggr].
\end{split}
\end{equation}
Then,
\begin{equation}
\begin{split}
    &-\lambda \frac{\partial f_{\lambda}(\lambda)}{\partial \lambda}  = \sum_{l=1}^q\sum_{k=q-n_t+1}^{q}\gamma_{k,l}e^{-\lambda /\beta_l} \lambda^{n_t+k-q-1} \Biggl[-(n_t+k-q-1)\\
& +\lambda/\beta_l \Biggr] \\
&\geq -(n_t-1)f_{\lambda}(\lambda) + \frac{\lambda}{\beta_q} f_{\lambda}(\lambda).
\end{split}
\end{equation}
Thus,
\begin{equation}
\begin{split}
\frac{-\lambda \frac{\partial f_{\lambda}(\lambda)}{\partial \lambda}}{f_{\lambda}(\lambda)} & \geq  -\left(n_t-1\right) + \frac{\lambda}{\beta_q} \xrightarrow{\lambda \to \infty} \infty.
\end{split}
\end{equation}
This completes the proof of the proposition.
\end{document}